\newtheorem{theorem}{Theorem}[section]
\newtheorem{lemma}[theorem]{Lemma}
\newtheorem{problem}[theorem]{Problem}
\newtheorem{definition}[theorem]{Definition}
\newtheorem{remark}[theorem]{Remark}
\definecolor{rangyek}{RGB}{0, 75, 255}
\definecolor{rangdo}{RGB}{239, 62, 91}
\DeclareRobustCommand\sampleline[1]{%
	\tikz\draw[#1] (0,0) (0,\the\dimexpr\fontdimen22\textfont2\relax)
	-- (2em,\the\dimexpr\fontdimen22\textfont2\relax);%
}
\let\NAT@parse\undefined
\newsavebox\myboxA
\newsavebox\myboxB
\newlength\mylenA
\newcommand*\xoverline[2][0.75]{%
	\sbox{\myboxA}{$\m@th#2$}%
	\setbox\myboxB\null
	\ht\myboxB=\ht\myboxA%
	\dp\myboxB=\dp\myboxA%
	\wd\myboxB=#1\wd\myboxA
	\sbox\myboxB{$\m@th\overline{\copy\myboxB}$}
	\setlength\mylenA{\the\wd\myboxA}
	\addtolength\mylenA{-\the\wd\myboxB}%
	\ifdim\wd\myboxB<\wd\myboxA%
	\rlap{\hskip 0.5\mylenA\usebox\myboxB}{\usebox\myboxA}%
	\else
	\hskip -0.5\mylenA\rlap{\usebox\myboxA}{\hskip 0.5\mylenA\usebox\myboxB}%
	\fi}
\newtcolorbox{resp}[1][]{%
	enhanced jigsaw,%
	colback=gray!5!white,%
	colframe=gray!80!black,%
	size=small,%
	boxrule=1pt,%
	halign title=flush center,%
	coltitle=black,%
	breakable,%
	drop shadow=black!50!white,%
	attach boxed title to top left={xshift=1cm,yshift=-\tcboxedtitleheight/2,yshifttext=-\tcboxedtitleheight/2},%
	minipage boxed title=3cm,%
	boxed title style={%
		colback=white,%
		size=fbox,%
		boxrule=1pt,%
		boxsep=2pt,%
		underlay={%
			\coordinate (dotA) at ($(interior.west) + (-0.5pt,0)$);
			\coordinate (dotB) at ($(interior.east) + (0.5pt,0)$);
			\begin{scope}[gray!80!black]
				\fill (dotA) circle (2pt);
				\fill (dotB) circle (2pt);
			\end{scope}
		}%
	},%
	#1%
}
\newtcolorbox{mybox}[2][]{enhanced,
	attach boxed title to top left={xshift=1cm,yshift=-2mm},
	fonttitle=\bfseries,varwidth boxed title=0.7\linewidth,
	colbacktitle=gray!45!white,coltitle=gray!10!black,colframe=gray!50!black,
	interior style={top color=gray!5!white,bottom color=gray!0!white},
	boxed title style={boxrule=0.75mm,colframe=white,
		borderline={0.1mm}{0mm}{gray!50!black},
		borderline={0.1mm}{0.75mm}{gray!50!black},
		interior style={top color=gray!30!white,bottom color=gray!5!white,
			middle color=gray!50!white},
		drop fuzzy shadow},
	title={#2},#1}
\newcommand{\R}{{\mathbb{R}}}
\newcommand{\N}{{\mathbb{N}}}
\newcommand{\ie}{{\it i.e.}}
\newcommand{\Vp}{{\mathcal{L}\mathbf V(x,\tilde x)}}
\newcommand{\V}{{\mathbf V(x,\tilde x)}}
\newcommand{\VecF}{{F}}
\DeclareFontFamily{U}{stix2bb}{}
\DeclareFontShape{U}{stix2bb}{m}{n} {<-> stix2-mathbb}{}
\NewDocumentCommand{\stixbbdigit}{m}{%
	\text{\usefont{U}{stix2bb}{m}{n}#1}%
}
\newcommand{\bbzero}{\stixbbdigit{0}}
\definecolor{lightblue}{rgb}{0.30, 0.75, 0.93}
\definecolor{lightblue}{rgb}{0.30, 0.75, 0.93}
\definecolor{mycolor}{rgb}{0, 0.45, 0.75}
\definecolor{mycolor1}{rgb}{0.39, 0.83, 0.07}
\definecolor{fluorescentpink}{rgb}{1.0, 0.08, 0.58}
\definecolor{royalblue(web)}{rgb}{0.25, 0.41, 0.88}
\definecolor{vividcerise}{rgb}{0.85, 0.11, 0.51}
\definecolor{tangelo}{rgb}{0.98, 0.3, 0.0}
\definecolor{persiangreen}{rgb}{0.0, 0.65, 0.58}
\definecolor{lemon}{rgb}{1.0, 0.97, 0.0}
\definecolor{gdash}{rgb}{0.10,0.82,0.10}
\definecolor{fluorescentpink}{rgb}{1.0, 0.08, 0.58}
\definecolor{D5Mr}{rgb}{0.62, 0.0, 1.0}
\definecolor{cUdg}{rgb}{0.3, 0.73, 0.09}
\definecolor{QKVd}{rgb}{0.03, 0.57, 0.82}
\definecolor{RE12}{rgb}{0.48, 0.25, 0.0}
\title{
Certified Learning of Incremental ISS Controllers for Unknown Nonlinear Polynomial Dynamics
}
 \author{Mahdieh Zaker, \IEEEmembership{Student Member,~IEEE}, David Angeli, \IEEEmembership{Fellow,~IEEE},\\ and Abolfazl Lavaei, \IEEEmembership{Senior Member,~IEEE}
 \thanks{M. Zaker and A. Lavaei are with the School of Computing, Newcastle University, United Kingdom. D. Angeli is with the Department of Electrical and Electronic Engineering, Imperial College London, United Kingdom, and the Department of Information Engineering, University of Florence, Italy.
 Emails: \{{\tt\small\href{mailto:mahdieh.zaker@newcastle.ac.uk}{mahdieh.zaker}, \href{mailto:abolfazl.lavaei@newcastle.ac.uk}{abolfazl.lavaei}\}@newcastle.ac.uk, \href{mailto:d.angeli@imperial.ac.uk}{d.angeli@imperial.ac.uk}}. }%
 }
\begin{document}

\maketitle
\thispagestyle{empty}
\pagestyle{empty}

\begin{abstract}
Incremental input-to-state stability ($\delta$-ISS) offers a robust framework to ensure that small input variations result in proportionally minor deviations in the state of a nonlinear system. This property is essential in practical applications where input precision cannot be guaranteed. However, analyzing $\delta$-ISS demands \emph{precise knowledge} of system dynamics to assess the state's incremental response to input changes, posing a challenge in real-world scenarios where mathematical models are unknown. In this work, we develop a data-driven approach to design $\delta$-ISS Lyapunov functions together with their corresponding $\delta$-ISS controllers for continuous-time input-affine nonlinear systems with polynomial dynamics, ensuring the $\delta$-ISS property is achieved without requiring knowledge of the system dynamics. In our data-driven scheme, we collect only two sets of input-state trajectories from sufficiently excited dynamics. By fulfilling a specific rank condition, we design $\delta$-ISS controllers using the collected samples through formulating a sum-of-squares optimization program. The effectiveness of our data-driven approach is evidenced by its application to a physical case study.
\end{abstract}

\section{Introduction}\label{sec:intro}
Most phenomena and engineering systems exhibit nonlinear behaviors with continuous-time evolutions, forming the cornerstone of numerous real-world applications, ranging from aerospace systems and chemical process control to renewable energy grids and healthcare technologies. The inherent complexity of these systems, stemming from their nonlinear dynamics, often makes traditional linearization-based methods inadequate for ensuring robust performance and stability under realistic operating conditions. Hence, their stability analysis and controller synthesis have long been recognized as pivotal challenges within the field of control theory.

A stronger property than stability for nonlinear systems is \emph{incremental} input-to-state stability ($\delta$-ISS), which is a more robust property that compares arbitrary trajectories with one another rather than with an equilibrium point. Just as Lyapunov functions are fundamental in stability analysis, $\delta$-ISS Lyapunov functions play a crucial role in examining incremental stability~\cite{angeli2002lyapunov}. However, not all nonlinear systems inherently exhibit the $\delta$-ISS property, necessitating the synthesis of a controller to enforce it, which is particularly valuable for analyzing complex nonlinear systems in real-world applications. Examples of such applications encompass synchronizing interconnected oscillators~\cite{stan2007analysis}, constructing symbolic models~\cite{pola2008approximately}, modeling nonlinear analog circuits~\cite{bond2010compact}, and synchronizing cyclic feedback systems~\cite{hamadeh2011global}.

Analyzing $\delta$-ISS requires knowledge of system dynamics to accurately evaluate the state's incremental response to variations in inputs. However, in many real-world applications, the exact mathematical models of nonlinear systems are either unavailable or difficult to obtain. To overcome this challenge, data-driven methods have been developed in two main categories: \emph{indirect} and \emph{direct} methods~\cite{dorfler2022bridging}. Indirect approaches focus on approximating unknown system dynamics using identification techniques; however, accurately deriving a mathematical model is often computationally intensive, especially for complex nonlinear systems~\cite{Hou2013model}. Even when system identification succeeds, designing a controller that ensures $\delta$-ISS for the identified model remains a significant hurdle. Thus, the complexity lies in two stages: (i) model identification and (ii) controller synthesis using traditional model-based methods. On the other hand, \emph{direct} data-driven approaches bypass system identification altogether, enabling stability analysis of the system directly from observed data~\cite{dorfler2022bridging}.

\textbf{Related Literature.} There have been several studies on the incremental stability analysis for nonlinear systems. Inspired by the concept of incremental ISS introduced by~\cite{angeli2002lyapunov} for continuous-time nonlinear systems, \cite{tran2016incremental} extends the study of this property to \emph{discrete-time} systems. Several new notions and conditions for incremental stability of \emph{hybrid} systems have been introduced  based on graphical closeness of solutions~\cite{biemond2018incremental}, a specific class of discontinuous dynamical systems containing a hybrid integrator using a small-gain approach~\cite{van2023small},  and a class of recurrent neural networks~\cite{d2023incremental}.

Additionally, incremental stability analysis has been conducted for Lurie systems in~\cite{su2024incremental}, which involve the feedback interconnection of a linear time-invariant system and a slope-bounded nonlinearity. The $\delta$-ISS of \emph{stochastic} switched systems has also been explored in~\cite{ren2021stochastic}. Furthermore, various studies have focused on synthesizing controllers to enforce incremental stability, such as developing control laws for systems in feedforward form~\cite{giaccagli2024incremental} and employing backstepping techniques for nonlinear (parametric-)strict-feedback systems~\cite{zamani2011backstepping}, a class of non-smooth control systems~\cite{zamani2013controller}, and stochastic Hamiltonian systems with jumps~\cite{jagtap2017backstepping}  (see \cite{kokotovic2001constructive} for more details about ISS property).

While system dynamics are essential in all these studies, some works also explore stability analysis and controller synthesis, though not incremental, through data-driven methods. In this regard, the work \cite{guo2021data} introduces a data-driven framework for stability analysis of persistently excited nonlinear polynomial systems. The study in \cite{berberich2020data} presents a robust data-driven model-predictive control approach for linear time-invariant systems, while \cite{taylor2021towards} develops a method for robust control synthesis of nonlinear systems under model uncertainty. Additionally, \cite{harrison2018control} proposes a data-driven strategy for learning controllers that can rapidly adapt to unknown dynamics. The work \cite{vzikelic2024compositional} proposes a methodology for composing neural network policies in stochastic settings, offering a formal probabilistic certification to ensure the satisfaction of specific policy behavior requirements.

Data-driven stability analysis has also been explored for switched systems in \cite{kenanian2019data}, continuous-time systems in \cite{boffi2021learning}, and discrete-time systems in \cite{lavaei2022data}. The work \cite{zhou2022neural} introduces a framework for stabilizing unknown nonlinear systems by jointly learning a neural Lyapunov function and a nonlinear controller, ensuring stability guarantees using Satisfiability Modulo Theories (SMT) solvers. Lastly, \cite{chen2024data} employs overapproximation techniques to define the set of polynomial dynamics consistent with noisy measured data, enabling the construction of an input-to-state stable Lyapunov function and a corresponding ISS control law for unknown nonlinear input-affine systems with polynomial dynamics.

Despite the substantial contributions of these studies, none addresses the analysis of the $\delta$-ISS property or the synthesis of $\delta$-ISS controllers for systems with unknown dynamics. In recent work~\cite{sundarsingh2024backstepping}, a Gaussian process as an \emph{indirect} data-driven approach is employed to learn the unknown dynamics of a specific class of control systems, followed by the design of a backstepping control scheme to ensure incremental input-to-state practical stability, a relaxed form of $\delta$-ISS. In contrast, through developing a \emph{direct} data-driven approach, our work circumvents the need to learn the system dynamics and derive a control law that renders the system $\delta$-ISS, relying solely on \emph{two input-state trajectories} of the system. Moreover, \cite{zaker2025data} proposes a \emph{direct} data-driven approach to analyze the incremental global asymptotic stability of nonlinear homogeneous networks by verifying the subsystems' $\delta$-ISS property, whereas our method focuses on \emph{controller synthesis}, a considerably more challenging task.

\textbf{Core Contributions.} We propose an innovative data-driven methodology for the design of $\delta$-ISS Lyapunov functions together with $\delta$-ISS controllers tailored to continuous-time nonlinear polynomial systems. Our approach aims to guarantee the $\delta$-ISS property while entirely bypassing the need for explicit knowledge of the underlying system dynamics. Unlike traditional model-based methods, our framework relies solely on data collected from the system, thereby enhancing its applicability in scenarios where obtaining accurate models is impractical or costly. Specifically, the proposed scheme leverages two sets of input-state trajectory data obtained from sufficiently excited system dynamics. By satisfying a specific rank condition on the collected data, our approach enables the synthesis of $\delta$-ISS controllers through the formulation of a sum-of-squares (SOS) optimization. We present a physical case study with unknown dynamics to validate our proposed methodology, highlighting its practicality and effectiveness. The results illustrate the robustness and versatility of our approach with respect to external inputs in achieving incremental stability for nonlinear systems, even in the absence of explicit system models. 

{\bf Organization.} The rest of the paper is structured as follows. Section~\ref{sec:2} introduces the mathematical preliminaries, notation, formal definitions of the system dynamics and the $\delta$-ISS property, along with the problem formulation. Section~\ref{sec:3} outlines the data collection procedure and presents our main theoretical results, which guarantee the $\delta$-ISS property for systems with unknown polynomial dynamics. Section~\ref{sec:4} provides simulation results to demonstrate the effectiveness of our proposed approach, while Section~\ref{sec:5} concludes the paper.

\section{Problem Description}\label{sec:2}
\textbf{Notation.} We denote the sets of all real numbers,  non-negative and positive real numbers by $\mathbb{R}$, $\mathbb{R}_0^+$, and $\mathbb{R}^+$, respectively. Moreover, the sets of non-negative and positive integers are represented as $\mathbb{N} := \{0, 1, 2, \ldots\}$ and $\mathbb{N}^+ := \{1, 2, \ldots\}$, respectively. The Euclidean norm of a vector $x \in \mathbb{R}^n$ is expressed as $\vert x \vert$, whereas the induced 2-norm of a matrix $A\in\R^{n\times m}$ is denoted by $\Vert A \Vert$. Given a signal $u(\cdot)$, the supremum of $u$ is denoted by $\vert u\vert_{\infty} := \text{sup}\{\vert u(t)\vert,t\geq 0\}$. For any square matrix $P$, the minimum and maximum eigenvalues are denoted by $\lambda_{\min}(P)$ and $\lambda_{\max}(P)$, respectively. The notation $P \succ 0$ $(P\succeq 0)$ indicates that a \emph{symmetric} matrix $P \in \mathbb{R}^{n \times n}$ is positive (semi-)definite, implying that all its eigenvalues are positive (non-negative). The transpose of a matrix $P$ is represented by $P^\top$. An $n \times n$ identity matrix is denoted by $\mathds I_{n}$, while $\bbzero_{n}$ represents an $n$-dimension vector with zero elements. The horizontal concatenation of vectors $x_i \in \R^n$ into an $n \times N$ matrix is written as $\begin{bmatrix} x_1 & \hspace{-0.2cm} x_2 & \hspace{-0.2cm} \dots & \hspace{-0.2cm} x_N \end{bmatrix}$. 
Functions are classified into specific types based on their properties: A function $\beta\!: \mathbb{R}_0^+ \rightarrow \mathbb{R}_0^+$ is defined as a $\mathcal{K}$ function if it is continuous, strictly increasing, and satisfies $\beta(0) = 0$. It is further categorized as a $\mathcal{K}_\infty$ function if it asymptotically approaches infinity as its argument goes to infinity. Moreover, a function $\beta\!: \mathbb{R}_0^+ \times \mathbb{R}_0^+\rightarrow \mathbb{R}_0^+$ belongs to the class $\mathcal{KL}$ if, for each fixed $s$, $\beta(r, s)$ is a $\mathcal{K}$ function with respect to $r$ and, for each fixed $r > 0$, $\beta(r, s)$ decreases with respect to $s$ and converges to zero as $s$ approaches infinity.

Motivated by the vast application of continuous-time nonlinear systems, as discussed in the introduction, we conduct our analysis on continuous-time input-affine nonlinear systems with polynomial dynamics, as formalized in the following definition.

\begin{definition}\label{def:ct-NPS}
	A continuous-time input-affine nonlinear system with polynomial dynamics (\textsc{ctia-NSP}) is described by 
	\begin{align}\label{eq:sys}
		\Omega\!: \dot x(t)=A\VecF(x(t)) + Bu(t),
	\end{align}
	where $A \in \R^{n\times N}$ is the system matrix, $B \in \R^{n\times m}$ is the control input matrix, while $u\in \R^{m}$ represents the control input. Here, $\VecF(x(t)) \in \R^{N}$ is a monomial vector in states $x\in \R^{n}$ such that $\VecF(\bbzero_{n}) = \bbzero_{N}$. We employ the tuple $\Omega\!=\!(A,B,\R^n,\R^N ,\R^m)$ to denote the system in~\eqref{eq:sys}. 
\end{definition}

We consider both matrices $A$ and $B$ to be \emph{unknown} and the exact knowledge of $\VecF(x)$ to be unavailable, yielding the system in~\eqref{eq:sys} unknown. However, we are provided with the maximum degree of $\VecF(x)$, enabling us to consider all possible combinations of monomials up to that degree or an exaggerated dictionary (\emph{i.e.,} family of functions) that contains the actual monomials of the system and possibly some additional irrelevant terms. With a slight abuse of notation, we use $\VecF(x)$ to represent both the actual and the exaggerated dictionary throughout the paper.

We now formally define the incremental input-to-state stability property in the upcoming definition~\cite{angeli2002lyapunov} for the \textsc{ctia-NSP} in~\eqref{eq:sys}.

\begin{definition}\label{def:delta-ISS}
	A \textsc{ctia-NSP} $\Omega\!=\!(A,B,\R^n,\R^N, \R^m)$ is incremental input-to-state stable ($\delta$-ISS) if there exist functions $\beta\in\mathcal{KL}$ and $\gamma\in\mathcal K_\infty$ such that for any arbitrary $x(0), \tilde x(0)\in\R^n$ and any arbitrary
	pair of locally essentially bounded, measurable input signals $u, \tilde u\in \mathscr U \subset \R^m$, with $\mathscr U$ being a closed convex set, one has
	\begin{align}
		\vert x(t)-\tilde x(t)\vert \leq \beta(\vert x(0)-\tilde x(0)\vert, t)+\gamma(\vert u-\tilde u\vert_{\infty})
	\end{align}
	 for $t>0$, where $\tilde x(t)$ is the state trajectory under the initial condition $\tilde x(0)\neq x(0)$ and the input signal $\tilde u$. The system is locally $\delta$-ISS if the property holds for sufficiently close initial states and input signals, and is incrementally globally asymptotically stable ($\delta$-GAS) for any arbitrary initial states when $u=\tilde u$.
\end{definition}

We hereby present the subsequent theorem, borrowed from~\cite{angeli2002lyapunov}, which elucidates the sufficient conditions for a \textsc{ctia-NSP} to achieve $\delta$-ISS, as specified in Definition~\ref{def:delta-ISS}.

\begin{theorem}
	Given a \textsc{ctia-NSP} $\Omega\!=\!(A,B,\R^n,\R^N,$ $ \R^m)$, assume there exist a smooth function $\mathbf V\!:\R^n\times\R^n\to\R_0^+$, referred to as a $\delta$-ISS Lyapunov function, and constants $ \underline{\alpha},\overline{\alpha},\epsilon \in \R^+,\rho\in \R^+_0$, satisfying
	\begin{itemize}
		\item[$\bullet$] $\forall x, \tilde x\!\in\! \R^n\!\!:$
		\begin{subequations}
			\begin{equation}\label{eq:dISS-con1-dis}
				\underline{\alpha}\vert x - \tilde x \vert^2\le \V\le \overline{\alpha}\vert x - \tilde x\vert^2,
			\end{equation}
			\item[$\bullet$] $\forall x, \tilde x\in\! \R^n, \: \forall u, \tilde u\in \R^m\!\!:$
			\begin{align}\label{eq:ISS-con2-dis}
				\Vp \leq -\epsilon \V + \rho \vert u - \tilde u\vert^2,
			\end{align}
		\end{subequations}
	\end{itemize}
	with
	\begin{align}
		\mathcal{L} \mathbf V(x, \tilde x) = &~\partial_x\V(A\VecF(x) + B u) \notag\\
		&+ \partial_{\tilde x}\V(A\VecF(\tilde x) + B \tilde u)\label{eq:Lie def}
	\end{align}
	representing the Lie derivative of $\mathbf V\!\!:\R^n\times\R^n\to\R_0^+$ with respect to the dynamics in~\eqref{eq:sys}, while $\partial_x \V = \frac{\partial \V}{\partial x}$ and $\partial_{\tilde x}\V=\frac{\partial\V}{\partial\tilde x}$. Then, the \textsc{ctia-NSP} $\Omega$ is  $\delta$-ISS in the sense of Definition~\ref{def:delta-ISS}.
\end{theorem}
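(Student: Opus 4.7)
The plan is to combine the two hypotheses into a scalar differential inequality for $\mathbf V$ along trajectories, integrate it by the comparison lemma, and then convert the resulting bound on $\mathbf V$ back into a bound on $|x(t)-\tilde x(t)|$ via the sandwich estimate in~\eqref{eq:dISS-con1-dis}. I would begin by fixing arbitrary initial conditions $x(0),\tilde x(0)\in\R^n$ and arbitrary locally essentially bounded measurable inputs $u,\tilde u$, and letting $x(\cdot),\tilde x(\cdot)$ denote the corresponding Carath\'eodory solutions of~\eqref{eq:sys}. Since $\mathbf V$ is smooth and the trajectories are absolutely continuous, the scalar map $t\mapsto \mathbf V(x(t),\tilde x(t))$ is absolutely continuous and its derivative coincides almost everywhere with the Lie derivative defined in~\eqref{eq:Lie def}, evaluated along $(x(t),\tilde x(t),u(t),\tilde u(t))$.

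Applying hypothesis~\eqref{eq:ISS-con2-dis} pointwise in time then yields, for almost every $t\geq 0$, the scalar differential inequality
\begin{equation*}
\tfrac{d}{dt}\mathbf V(x(t),\tilde x(t)) \leq -\epsilon\,\mathbf V(x(t),\tilde x(t)) + \rho\,|u(t)-\tilde u(t)|^2.
\end{equation*}
Invoking the comparison lemma (equivalently, multiplying by $e^{\epsilon t}$ and integrating from $0$ to $t$) gives
\begin{equation*}
\mathbf V(x(t),\tilde x(t)) \leq e^{-\epsilon t}\mathbf V(x(0),\tilde x(0)) + \rho\!\int_0^t\! e^{-\epsilon(t-s)}|u(s)-\tilde u(s)|^2\,ds,
\end{equation*}
and bounding $|u(s)-\tilde u(s)|\leq|u-\tilde u|_\infty$ together with $\int_0^t e^{-\epsilon(t-s)}\,ds \leq 1/\epsilon$ simplifies this to
\begin{equation*}
\mathbf V(x(t),\tilde x(t)) \leq e^{-\epsilon t}\mathbf V(x(0),\tilde x(0)) + \tfrac{\rho}{\epsilon}\,|u-\tilde u|_\infty^2.
\end{equation*}

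To recover a bound on the incremental state error, I would apply both sides of~\eqref{eq:dISS-con1-dis}: the lower bound on the left of the previous display and the upper bound on $\mathbf V(x(0),\tilde x(0))$ on the right. This yields
\begin{equation*}
|x(t)-\tilde x(t)|^2 \leq \tfrac{\overline\alpha}{\underline\alpha}\,e^{-\epsilon t}|x(0)-\tilde x(0)|^2 + \tfrac{\rho}{\epsilon\,\underline\alpha}\,|u-\tilde u|_\infty^2.
\end{equation*}
Taking square roots and using $\sqrt{a+b}\leq\sqrt a+\sqrt b$ for $a,b\geq 0$ produces the canonical $\delta$-ISS estimate with $\beta(r,t)\Let\sqrt{\overline\alpha/\underline\alpha}\,e^{-\epsilon t/2}\,r\in\mathcal{KL}$ and $\gamma(r)\Let\sqrt{\rho/(\epsilon\,\underline\alpha)}\,r\in\mathcal K_\infty$, matching Definition~\ref{def:delta-ISS}.

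The only genuinely delicate point is justifying the chain rule for $t\mapsto\mathbf V(x(t),\tilde x(t))$ under merely measurable inputs, which I expect to be the main (mild) obstacle; this is standard given smoothness of $\mathbf V$ and absolute continuity of $x,\tilde x$, and it also legitimises the use of the comparison lemma for an inequality holding only almost everywhere. The remainder of the argument is an essentially routine assembly of the exponential decay of $\mathbf V$ with the quadratic sandwich bound.
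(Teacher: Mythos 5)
Your argument is correct: the paper itself states this theorem without proof, borrowing it from \cite{angeli2002lyapunov}, and your comparison-lemma route (chain rule along Carath\'eodory solutions, integration of the dissipation inequality \eqref{eq:ISS-con2-dis}, then conversion through the quadratic sandwich \eqref{eq:dISS-con1-dis} with $\beta(r,t)=\sqrt{\overline\alpha/\underline\alpha}\,e^{-\epsilon t/2}r$ and $\gamma(r)=\sqrt{\rho/(\epsilon\underline\alpha)}\,r$) is exactly the standard argument underlying the cited result. The only cosmetic remark is that when $\rho=0$ your $\gamma$ degenerates to the zero function, which is not class $\mathcal K_\infty$, but the estimate then trivially holds with any $\mathcal K_\infty$ choice of $\gamma$, so this is not a gap.
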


While $\delta$-ISS Lyapunov functions facilitate the incremental stability analysis of \textsc{ctia-NSP}, constructing such functions is hindered by the unknown matrices $A$ and $B$, as appeared in \eqref{eq:Lie def}. Given this significant challenge, we now formally define the primary problem under investigation in this work.

\begin{resp}
	\begin{problem}\label{Problem-dis}
		Consider a \textsc{ctia-NSP} $\Omega=(A,B,\R^n,\R^N,$ $ \R^m)$, characterized by unknown matrices $A, B,$ and $\VecF(x)$, where only the maximum degree of $\VecF(x)$ or an exaggerated dictionary is  known. Design a $\delta$-ISS Lyapunov function $\mathbf V$ along with a corresponding $\delta$-ISS controller $u$ by leveraging solely two input-state trajectories collected from $\Omega$.
	\end{problem}
\end{resp}

To tackle Problem \ref{Problem-dis}, we present our data-driven approach in the subsequent section.
\section{Data-Driven Framework}\label{sec:3}
We choose our $\delta$-ISS Lyapunov function in a quadratic form as $\V = (x - \tilde x)^\top P(x - \tilde x)$, with $P \succ 0$. We collect $T \in \mathbb{N}^+$ samples with sampling time $\tau\in\R^+$ over the period $[t_0, t_0+(T - 1)\tau]$ from the unknown \textsc{ctia-NSP} in~\eqref{eq:sys} as follows:
\begin{align}\label{eq:data}
	\begin{array}{llllll}
		\mathbf U_{0,T} & \hspace{-0.2cm}= & \hspace{-0.2cm}[u(t_0) & \hspace{-0.2cm}u(t_0 + \tau) & \hspace{-0.2cm}\dots & \hspace{-0.2cm}u(t_0 +(T-1)\tau)],\\
		\mathbf X_{0, T} & \hspace{-0.2cm}= & \hspace{-0.2cm}[x(t_0) & \hspace{-0.2cm}x(t_0 + \tau) & \hspace{-0.2cm}\dots & \hspace{-0.2cm}x(t_0 + (T-1)\tau)],\\
		\mathbf X_{1, T} & \hspace{-0.2cm}= & \hspace{-0.2cm}[\dot x(t_0) & \hspace{-0.2cm}\dot x(t_0 + \tau) & \hspace{-0.2cm}\dots & \hspace{-0.2cm}\dot x(t_0 + (T-1)\tau)].
	\end{array}
\end{align}
The sampled data in~\eqref{eq:data} are referred to as a \emph{set of input-state trajectories}. Given the incremental nature of the property, we collect an additional set of input-state trajectories $\tilde{\mathbf X}_{0,T}$ and $\tilde{\mathbf X}_{1,T}$ based on $\tilde x$ with $\tilde{\mathbf U}_{0,T}$. Since $\mathbf U_{0,T}$ is \textit{arbitrary} input data during sample collection, we use the same input while collecting the other set of trajectories $\tilde{\mathbf X}_{0,T}$ and $\tilde{\mathbf X}_{1,T}$ from different initial conditions, \ie, $\tilde{\mathbf U}_{0,T} = \mathbf U_{0,T}$, whereas $\tilde x(t_0)\neq x(t_0)$ and $\dot{\tilde x}(t_0)\neq\dot x(t_0)$.
\begin{remark}
	Since the state derivatives represented in $\mathbf X_{1,T}$ cannot be directly measured, one can adopt two practical solutions to address this: (i) numerically approximating their values using $\dot x(t_0 + k\tau) = \frac{x(t_0+(k+1)\tau) - x(t_0 + k\tau)}{\tau}$, for $k\in\{0, \dots, T-1\}$, or (ii) estimating them through appropriate filters based on existing methods~\cite{larsson2008estimation, padoan2015towards}. Considering that both methods may introduce approximation errors (regarded as measurement noise), an approach akin to that in~\cite{guo2021data} can be utilized to account for the influence of errors in derivative approximations.
	In this approach, the data $\mathbf X_{1,T}$ is modeled as $\mathbf X_{1,T}=\hat{\mathbf X}_{1,T}+ \Delta$, where $\hat{\mathbf X}_{1,T}$ denotes the noise-free data and $\Delta$ captures the error as a small perturbation. The sole assumption is that $\Delta$ satisfies $\Delta\Delta^\top \preceq \Lambda\Lambda^\top$ for some known matrix $\Lambda$, a standard assumption in control that intuitively reflects bounded noise energy during data collection. However, as this work focuses on synthesizing controllers to ensure the $\delta$-ISS property, a fundamentally complex yet critical concept in control system analysis, we do not address the impact of measurement noise in the current analysis to maintain clarity of presentation. Nonetheless, it remains an important direction for future research and is currently under investigation.
\end{remark}

As the system dynamics, particularly the unknown matrices $A$ and $B$, appear in \eqref{eq:ISS-con2-dis} via \eqref{eq:Lie def}, we need to utilize the collected data in~\eqref{eq:data} to eliminate their dependency. To do so, we propose the next lemma to obtain the data-driven closed-loop representation of \textsc{ctia-NSP}, inspired by~\cite{guo2021data}.

\begin{lemma}\label{Lemma1-dis}
	Given a \textsc{ctia-NSP} $\Omega\!=\!(A,B,\R^n,\R^N, \R^m)$ and full \emph{row-rank} $(N\times T)$ matrices $\mathbf J_{0,T}$ and $\tilde{\mathbf J}_{0,T}$ defined as
	\begin{align}\label{eq:monomial traj-dis}
		\begin{array}{lllll}
			\mathbf J_{0,T} & \hspace{-0.2cm}= & \hspace{-0.2cm}\big[\VecF(x(t_0)) & \hspace{-0.2cm}\dots & \VecF(x(t_0 + (T-1)\tau))\big],\\
			\tilde{\mathbf J}_{0,T} & \hspace{-0.2cm}= & \hspace{-0.2cm}\big[\VecF(\tilde x(t_0)) & \hspace{-0.2cm}\dots & \VecF(\tilde x(t_0 + (T-1)\tau))\big],
		\end{array}
	\end{align}
	consider polynomial matrices $G(x)\in\R^{T\times n}$ and $G(\tilde x)\in\R^{T\times n}$ such that 
	\begin{subequations}\label{eq:lemma cons}
		\begin{align}\label{eq:lemma con1-dis}
			\aleph(x)& = 
			\mathbf J_{0,T}G(x),\\
			\aleph(\tilde x) &= 
			\tilde{\mathbf J}_{0,T}G(\tilde x),\label{eq:lemma con2-dis}
		\end{align}
	\end{subequations}
	where $\aleph(\cdot)$ is a transformation matrix satisfying		
		\begin{align}\label{new}	
			\VecF(x) &= \aleph(x)x,
		\end{align}
    for any $x\in\R^n$. Then, the closed-loop systems $A\VecF(x) + Bu$ and $A\VecF(\tilde x) + B\tilde u$, with the controllers designed as $u = \mathbf K(x)x + \hat u= \mathbf U_{0,T}G(x)x + \hat u$ and $\tilde u = \mathbf K(\tilde x)\tilde x + \hat{\tilde u}= \mathbf U_{0,T}G(\tilde x)\tilde x + \hat{\tilde u}$, where $\hat u$ and $\hat{\tilde u}$ are locally essentially bounded, measurable \textit{external} inputs, can be represented by their data-driven alternatives as
	\begin{equation}\label{eq:data-based cl}
		\begin{aligned}
			A \VecF(x)+ B u =\,& \mathbf X_{1,T} G(x)x+ B \hat u,\\
			A \VecF(\tilde x)+ B \tilde u =\,& \tilde{\mathbf X}_{1,T} G(\tilde x)\tilde x+ B \hat{\tilde u}.
		\end{aligned}
	\end{equation}
\end{lemma}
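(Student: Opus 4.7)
The plan is to derive \eqref{eq:data-based cl} as a direct algebraic consequence of three ingredients: (i) the matrix form of the dynamics \eqref{eq:sys} evaluated on the collected snapshots \eqref{eq:data}, (ii) the factorization $\VecF(x)=\aleph(x)x$ from \eqref{new}, and (iii) the defining equations \eqref{eq:lemma cons} that link $\aleph(\cdot)$ to the data through the polynomial matrices $G(\cdot)$. Because the $\tilde x$-trajectory is handled identically up to relabelling (and, crucially, uses the \emph{same} input data $\tilde{\mathbf U}_{0,T}=\mathbf U_{0,T}$), I would carry out the argument in full only for the $x$-trajectory and then point out the obvious repetition.

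First, evaluating \eqref{eq:sys} at each sampling instant $t_0+k\tau$, $k=0,\dots,T-1$, and horizontally stacking the resulting vector identities yields the data identity
\begin{align}\label{eq:prop-data-id}
\mathbf X_{1,T} \;=\; A\mathbf J_{0,T}+B\mathbf U_{0,T},
\end{align}
and the same computation applied to the second dataset gives $\tilde{\mathbf X}_{1,T}=A\tilde{\mathbf J}_{0,T}+B\mathbf U_{0,T}$. Next, I would substitute the proposed state-feedback $u=\mathbf U_{0,T}G(x)x+\hat u$ into $A\VecF(x)+Bu$ and rewrite $\VecF(x)=\aleph(x)x=\mathbf J_{0,T}G(x)x$ by combining \eqref{new} with \eqref{eq:lemma con1-dis}, which gives
\begin{align*}
A\VecF(x)+Bu &= A\mathbf J_{0,T}G(x)x+B\mathbf U_{0,T}G(x)x+B\hat u\\
&= (A\mathbf J_{0,T}+B\mathbf U_{0,T})G(x)x+B\hat u.
\end{align*}
Plugging \eqref{eq:prop-data-id} into the bracketed factor immediately produces the first line of \eqref{eq:data-based cl}; the analogous substitution with $\tilde x$, $G(\tilde x)$, $\tilde{\mathbf J}_{0,T}$, and $\tilde{\mathbf X}_{1,T}$ produces the second line.

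The only conceptually nontrivial point is that polynomial matrices $G(x)$ and $G(\tilde x)$ satisfying \eqref{eq:lemma cons} exist at all, since it is precisely this factorization that causes the unknown $A$ to cancel in the rearrangement above. This is exactly what the full row-rank hypothesis on $\mathbf J_{0,T}$ and $\tilde{\mathbf J}_{0,T}$ is designed to guarantee: one may, for instance, take $G(x)=\mathbf J_{0,T}^{\dagger}\aleph(x)$, inheriting polynomial entries from $\aleph(x)$. Beyond this rank requirement, the lemma reduces to the one-line substitution above and presents no further obstacle.
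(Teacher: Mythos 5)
Your proposal is correct and follows essentially the same route as the paper's own proof: stack the sampled dynamics to get $\mathbf X_{1,T}=A\mathbf J_{0,T}+B\mathbf U_{0,T}$, substitute the controller together with $\VecF(x)=\aleph(x)x$ and \eqref{eq:lemma con1-dis} so that the unknown $[B\;\, A]$ factors against the data, and repeat verbatim for the $\tilde x$-trajectory with $\tilde{\mathbf U}_{0,T}=\mathbf U_{0,T}$. Your added remark on existence of $G(x)$ via $\mathbf J_{0,T}^{\dagger}\aleph(x)$ is a harmless (and correct) supplement to what the lemma itself assumes.
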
\vspace{0.2cm}

\begin{proof}
	For the system dynamics \eqref{eq:sys}, by employing the trajectories $\mathbf X_{1,T}$ and $\mathbf U_{0,T}$ in~\eqref{eq:data} and matrix $\mathbf J_{0,T}$ in~\eqref{eq:monomial traj-dis}, we have
	\begin{align} \label{eq:data cl}
		\mathbf X_{1,T} &= A \mathbf J_{0,T} + B\mathbf U_{0,T} = [B\quad A]\begin{bmatrix}
			\mathbf U_{0,T}\\
			\mathbf J_{0,T}
		\end{bmatrix}\!\!.
	\end{align}
	By applying the controller $u \!=\! \mathbf K(x)x + \hat u \!=\! \mathbf U_{0,T}G(x)x + \hat u$, one has
	\begin{align}\notag
		A\VecF(x) + Bu   \overset{\eqref{new}}{=}&~(A\aleph(x) + B \mathbf K(x))x + B \hat u \\
		= &~[B\quad A] \begin{bmatrix}
			\mathbf K(x)\\
			\aleph(x)
		\end{bmatrix}x + B \hat u\notag\\
		\overset{\eqref{eq:lemma con1-dis}}{=} &~[B\quad A] \begin{bmatrix}
			\mathbf U_{0,T}\\
			\mathbf J_{0,T}
		\end{bmatrix}G(x)x+ B \hat u.\label{new3}
	\end{align}
	Now by applying~\eqref{eq:data cl} to~\eqref{new3}, we have 
	\begin{align*}
		A \VecF(x)+ B u &= \mathbf X_{1,T} G(x)x + B \hat u.
	\end{align*}
	Analogously, for the other trajectory $\tilde x$, by utilizing $\tilde{\mathbf X}_{1,T}$ and $\tilde{\mathbf U}_{0,T}=\mathbf U_{0,T}$ in \eqref{eq:data} and matrix $\tilde{\mathbf J}_{0,T}$ in~\eqref{eq:monomial traj-dis}, we have
	\begin{align*}
		\tilde{\mathbf X}_{1,T} =& A \tilde{\mathbf J}_{0,T} + B \mathbf U_{0,T}= [B\quad A]\begin{bmatrix}
			\mathbf U_{0,T}\\
			\tilde{\mathbf J}_{0,T}
		\end{bmatrix}\!.
	\end{align*}
	Then, by applying the controller designed as $\tilde u = \mathbf K(\tilde x)\tilde x + \hat{\tilde u}= \mathbf U_{0,T}G(\tilde x)\tilde x + \hat{\tilde u}$ and
	based on conditions~\eqref{eq:lemma con2-dis} and \eqref{new}, following the similar steps for $x$, one can attain
	\begin{align*}
		A \VecF(\tilde x)+ B \tilde u =\tilde{\mathbf X}_{1,T} G(\tilde x)\tilde x+ B \hat{\tilde u},
	\end{align*}
	which concludes the proof.
\end{proof}

It is worth noting that, with the controller in the form $u = \mathbf K(x)x + \hat u = \mathbf U_{0,T}G(x)x + \hat u$, the aim is to design $\mathbf K(x)$ such that the \textsc{ctia-NSP} is $\delta$-ISS with respect to the external input $\hat u$ (\emph{i.e.,} ensuring robustness against $\hat u$). A similar argument holds when $\tilde u$ is used, with $\hat{\tilde u}$ treated as the external input.

\begin{remark}
For the feasibility of the conditions outlined in~\eqref{eq:lemma cons}, it is required that the matrices $\mathbf J_{0,T}$ and $\tilde{\mathbf J}_{0,T}$ possess full row-rank. This requirement is essential for fulfilling constraints~\eqref{eq:lemma con1-dis} and \eqref{eq:lemma con2-dis}, when designing $G(x)$ and $G(\tilde x)$. The full row-rank of $\mathbf J_{0,T}$ and $\tilde{\mathbf J}_{0,T}$ effectively acts as a measure of data richness, highlighting the importance of employing adequately informative data. Since $\mathbf J_{0,T}$ and $\tilde{\mathbf J}_{0,T}$  are derived from collected data, their full row-rank assumption can be readily verified. 
\end{remark}

\begin{remark}
	Since $\VecF(\bbzero_{n}) = \bbzero_{N}$, it follows that transformation matrix $\aleph(x)$ can always be constructed to meet condition~\eqref{new}, without loss of generality. This enables our framework to focus on representations only involving $x$ and $\tilde{x}$, rather than $F(x)$ and $F(\tilde{x})$, in alignment with the structure of the $\delta$-ISS Lyapunov function $\V = (x - \tilde x)^\top P(x - \tilde x)$.
\end{remark}

Building upon the data-driven representation of the closed-loop system~\eqref{eq:data-based cl} provided in Lemma~\ref{Lemma1-dis}, we offer the following theorem, enabling the construction of both the $\delta$-ISS Lyapunov function and its associated $\delta$-ISS controller directly from data for the unknown \textsc{ctia-NSP} $\Omega$.

\begin{theorem}\label{Thm:main3-dis}
	Consider a \textsc{ctia-NSP} $\Omega=(A,B,\R^n,$ $\R^N, \R^m)$, characterized by unknown matrices $A$ and $B$. Given the closed-loop data-based representations in Lemma~\ref{Lemma1-dis}, if there exist polynomial matrices $\mathds Y(x)\in \R^{T\times n},$ and $\mathds Y(\tilde x)\in \R^{T\times n}$, and constant matrices $\Sigma\in\R^{n\times n}$, and $P\in\R^{n\times n}$, where $P\succ 0$, such that 
	\begin{subequations}\label{eq:main theorem conditions}
		\begin{align}
			\mathbf J_{0,T}\mathds Y(x) &= \aleph(x)P^{-1}\!,\label{eq:con1 SOS-dis}\\
			\tilde{\mathbf J}_{0,T}\mathds Y(\tilde x) &= \aleph(\tilde x)P^{-1}\!,\label{eq:con1 SOS'-dis}\\
			\mathbf X_{1,T}\mathds Y(x) &= \Sigma,\label{eq:Y on cl-dis}\\
			\tilde{\mathbf X}_{1,T}\mathds Y(\tilde x) &= \Sigma,\label{eq:Y' on cl-dis}\\
			\Sigma + \Sigma^\top + \vartheta\mathds I_n&\preceq -\epsilon P^{-1}\label{eq:con2 SOS-dis},
		\end{align}
		for some $\epsilon, \vartheta\in \R^+$, then $\V = (x - \tilde x)^\top P (x - \tilde x)$ is a $\delta$-ISS Lyapunov function with $\underline{\alpha} = \lambda_{\min}(P)$, $\overline{\alpha} = \lambda_{\max}(P)$, $ \rho =  \frac{\Vert B \Vert^2}{\vartheta}$, and $u =\mathbf K(x)x + \hat u$ is its $\delta$-ISS controller for $\Omega$ with
		\begin{align}\label{eq:K}
			\mathbf K(x) = \mathbf U_{0,T}\mathds Y(x)P.
		\end{align}
	Moreover, the \textsc{ctia-NSP} $\Omega$ is $\delta$-GAS if $\hat u = \hat{\tilde u}$.
	\end{subequations}
\end{theorem}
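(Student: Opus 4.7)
}
The plan is to verify the two Lyapunov conditions \eqref{eq:dISS-con1-dis}--\eqref{eq:ISS-con2-dis} for the candidate $\V = (x-\tilde x)^\top P(x-\tilde x)$ by first reducing the closed-loop representation of Lemma~\ref{Lemma1-dis} to an algebraic object that does \emph{not} depend on $A$, $B$ or $\VecF$, and then bounding the cross-term coming from the external inputs via a Young-type inequality. The quadratic sandwich bound is immediate: since $P\succ 0$,
\begin{equation*}
\lambda_{\min}(P)\vert x-\tilde x\vert^2 \le \V \le \lambda_{\max}(P)\vert x-\tilde x\vert^2,
\end{equation*}
so \eqref{eq:dISS-con1-dis} holds with $\underline{\alpha} = \lambda_{\min}(P)$ and $\overline{\alpha} = \lambda_{\max}(P)$.

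Next I would connect the decision variable $\mathds Y$ in Theorem~\ref{Thm:main3-dis} with the parameterization $G$ of Lemma~\ref{Lemma1-dis} by setting $G(x) := \mathds Y(x)P$ and $G(\tilde x) := \mathds Y(\tilde x) P$. Conditions \eqref{eq:con1 SOS-dis}--\eqref{eq:con1 SOS'-dis} then give $\mathbf J_{0,T}G(x) = \aleph(x)$ and $\tilde{\mathbf J}_{0,T}G(\tilde x) = \aleph(\tilde x)$, so Lemma~\ref{Lemma1-dis} applies with controllers $u = \mathbf U_{0,T}\mathds Y(x)Px + \hat u$ and $\tilde u = \mathbf U_{0,T}\mathds Y(\tilde x)P\tilde x + \hat{\tilde u}$, \emph{i.e.}, exactly the gain \eqref{eq:K}. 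Consequently the closed-loop increments reduce, via \eqref{eq:Y on cl-dis}--\eqref{eq:Y' on cl-dis}, to
\begin{equation*}
A\VecF(x)+Bu = \Sigma P x + B\hat u, \qquad A\VecF(\tilde x)+B\tilde u = \Sigma P\tilde x + B\hat{\tilde u},
\end{equation*}
which eliminates the unknown dynamics from the Lie-derivative computation.

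The core step is then the Lie-derivative bound. Using $\partial_x \V = 2(x-\tilde x)^\top P$ and $\partial_{\tilde x}\V = -2(x-\tilde x)^\top P$, setting $y := P(x-\tilde x)$ and inserting the simplified closed-loop expressions gives
\begin{equation*}
\Vp = y^\top(\Sigma+\Sigma^\top)y + 2y^\top B(\hat u - \hat{\tilde u}).
\end{equation*}
Applying condition \eqref{eq:con2 SOS-dis} yields $y^\top(\Sigma+\Sigma^\top)y \le -\epsilon (x-\tilde x)^\top P P^{-1}P(x-\tilde x) - \vartheta\vert y\vert^2 = -\epsilon \V - \vartheta\vert y\vert^2$. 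To absorb the cross term I would use Young's inequality in the form $2y^\top B(\hat u-\hat{\tilde u}) \le \vartheta\vert y\vert^2 + \vartheta^{-1}\Vert B\Vert^2\vert \hat u-\hat{\tilde u}\vert^2$. Combining these two estimates the $\vert y\vert^2$ terms cancel and we obtain
\begin{equation*}
\Vp \le -\epsilon\V + \frac{\Vert B\Vert^2}{\vartheta}\vert \hat u - \hat{\tilde u}\vert^2,
\end{equation*}
which is exactly \eqref{eq:ISS-con2-dis} with $\rho = \Vert B\Vert^2/\vartheta$, proving the $\delta$-ISS claim. For the $\delta$-GAS statement, when $\hat u = \hat{\tilde u}$ the input term vanishes and the inequality reduces to $\Vp \le -\epsilon \V$, yielding exponential convergence of $\V$ and therefore of $\vert x-\tilde x\vert$ to zero from any pair of initial conditions.

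The main obstacle is the cross-term handling: it is what forces the particular choice of slack $\vartheta$ in \eqref{eq:con2 SOS-dis} and fixes the gain $\rho$. Everything else is essentially bookkeeping, provided the identification $G = \mathds Y P$ is carried out so that Lemma~\ref{Lemma1-dis} applies directly and the unknown matrices $A$, $B$ are eliminated via \eqref{eq:Y on cl-dis}--\eqref{eq:Y' on cl-dis} before any further estimation.
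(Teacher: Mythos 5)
Your proposal is correct and follows essentially the same route as the paper's proof: the quadratic sandwich bound, the identification $G(x)=\mathds Y(x)P$ to invoke Lemma~\ref{Lemma1-dis}, elimination of $A,B$ via \eqref{eq:Y on cl-dis}--\eqref{eq:Y' on cl-dis}, and Young's inequality with parameter $\vartheta$ to absorb the $B(\hat u-\hat{\tilde u})$ cross term, arriving at $\Vp \le -\epsilon\V + \frac{\Vert B\Vert^2}{\vartheta}\vert\hat u-\hat{\tilde u}\vert^2$. The only difference is cosmetic ordering — you apply \eqref{eq:con2 SOS-dis} before Young's inequality and cancel the $\vartheta\vert y\vert^2$ slack, while the paper applies Young first and then folds the $\vartheta PP$ term into the factored matrix inequality — which is algebraically equivalent.
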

\begin{proof}
	Firstly, we show that condition~\eqref{eq:dISS-con1-dis} holds for the candidate Lyapunov function. Since
	\begin{align*}
		\lambda_{\min}(P)\vert x - \tilde x\vert^2 \leq \underbrace{(x - \tilde x)^\top P (x - \tilde x)}_{\V} \leq \lambda_{\max}(P)\vert x - \tilde x \vert ^2,
	\end{align*}
	choosing $\underline{\alpha} = \lambda_{\min}(P)$ and $\overline{\alpha} = \lambda_{\max}(P)$ simply satisfies condition~\eqref{eq:dISS-con1-dis}.
	
   Now we demonstrate that condition~\eqref{eq:ISS-con2-dis} holds as well. Since $\aleph(x)P^{-1} = \mathbf J_{0,T}\mathds Y(x)$ as per condition~\eqref{eq:con1 SOS-dis}, we have $\aleph(x) = \mathbf J_{0,T} \mathds Y(x) P$. We also have $\aleph(x) = \mathbf J_{0,T} G(x)$ as in condition~\eqref{eq:lemma con1-dis}. Hence, one can opt for $G(x) = \mathds Y(x) P$ as an appropriate choice. To show condition~\eqref{eq:ISS-con2-dis}, we have
	\begin{align*}
		\Vp =&~\partial_x\V(A\VecF(x) + B u) \\
		&+ \partial_{\tilde x}\V(A\VecF(\tilde x)+ B \tilde u)\\
		=&~2(x - \tilde x)^\top P (A\VecF(x) + B u)\\
		& - 2(x - \tilde x)^\top P (A\VecF(\tilde x)+ B \tilde u).
	\end{align*}
	By applying $u= \mathbf K(x)x +\hat u$ and $\tilde u = \mathbf K(\tilde x)\tilde x + \hat{\tilde u}$, and considering $\VecF(x) = \aleph(x)x$ and $\VecF(\tilde x) = \aleph(\tilde x)\tilde x$ based on condition~\eqref{new}, one can obtain
	\begin{align*}
		\Vp =&~2(x - \tilde x)^\top P (A\aleph(x) + B \mathbf K(x))x  \\
		&- 2(x - \tilde x)^\top P (A\aleph(\tilde x) + B \mathbf K(\tilde x))\tilde x \\
		&+ 2(x - \tilde x)^\top P B (\hat u - \hat{\tilde u}).
	\end{align*}
	By leveraging the data-based representation of the closed-loop system in Lemma~\ref{Lemma1-dis}, and since $G(x) = \mathds Y(x)P$ and $G(\tilde x) = \mathds Y(\tilde x) P$, we attain
	\begin{align*}
		\Vp =&~2(x - \tilde x)^\top P \mathbf X_{1,T}\mathds Y(x) P x \\
		&- 2(x - \tilde x)^\top P \tilde{\mathbf X}_{1,T} \mathds Y(\tilde x) P \tilde x \\
		&+ 2(x - \tilde x)^\top P  B (\hat u - \hat{\tilde u}).
	\end{align*}
	Considering conditions~\eqref{eq:Y on cl-dis} and \eqref{eq:Y' on cl-dis}, we have
	\begin{align*}
		\Vp =&~2(x - \tilde x)^\top P \Sigma P(x - \tilde x) \\
		&+ 2\underbrace{(x - \tilde x)^\top P}_{a} \underbrace{ B (\hat u - \hat{\tilde u})}_{b}.
	\end{align*}
	Using Cauchy-Schwarz inequality~\cite{bhatia1995cauchy} as $a b \leq \vert a \vert \vert b\vert,$ for any $a^\top,b\in \mathbb R^n$, and by leveraging Young's inequality~\cite{young1912classes} as $\vert a\vert \vert b\vert\leq \frac{\vartheta}{2}\vert a\vert^2+\frac{1}{2\vartheta}\vert b\vert^2,$ for any $\vartheta>0$, we get
	\begin{align*}
		\Vp \leq&~2(x - \tilde x)^\top P \Sigma P(x - \tilde x) \\
		&+ \vartheta(x - \tilde x)^\top P P (x - \tilde x) + \frac{\Vert B \Vert^2}{\vartheta}\vert \hat u - \hat{\tilde u}\vert^2.
	\end{align*}
	By applying expansion and factorization on the above expression, one can attain
	\begin{align*}
		\Vp \leq &~(x - \tilde x)^\top P \big[\Sigma + \Sigma^\top + \vartheta\mathds I_n\big] P(x - \tilde x) \\
		&+ \frac{\Vert B \Vert^2}{\vartheta}\vert \hat u - \hat{\tilde u}\vert^2.
	\end{align*}
	Now, according to condition~\eqref{eq:con2 SOS-dis}, we have
	\begin{align*}
		\Vp &\leq -\epsilon(x - \tilde x)^\top P\underbrace{P^{-1}P}_{\mathds I_n} (x - \tilde x) + \frac{\Vert B \Vert^2}{\vartheta}\vert \hat u - \hat{\tilde u}\vert^2\\
		& = -\epsilon \V + \rho\vert \hat u - \hat{\tilde u}\vert^2,
	\end{align*}
	satisfying condition~\eqref{eq:ISS-con2-dis} with $\rho=\frac{\Vert B \Vert^2}{\vartheta}$.  Then, one can deduce that $\V = (x-\tilde x)^\top P (x-\tilde x)$ is a $\delta$-ISS Lyapunov function and $u =\mathbf U_{0,T}G(x)x + \hat u =\mathbf U_{0,T}\mathds Y(x)Px + \hat u$ is its corresponding $\delta$-ISS controller for the \textsc{ctia-NSP} $\Omega$, which completes the proof.
\end{proof}
\begin{remark}\label{SOS}
	Existing software tools like \textsf{SOSTOOLS} \cite{papachristodoulou2013sostools}, in conjunction with a semidefinite programming (SDP) solver such as \textsf{SeDuMi} \cite{sturm1999using}, can be employed to enforce the conditions in~\eqref{eq:main theorem conditions}.
\end{remark}

\begin{remark}
    Note that $\rho = \frac{\Vert B \Vert^2}{\vartheta}$ is not used in the analysis or the construction of the $\delta$-ISS controller: it serves only as an \emph{existential} value; hence, knowledge of $B$ is not required. However, if the proposed approach is used for purposes requiring the value of $\rho$, it suffices to know an upper bound on $\Vert B \Vert$.
\end{remark}
\begin{algorithm}[t!]
		\caption{Data-driven design of $\delta$-ISS Lyapunov function and its $\delta$-ISS controller}
		\begin{center}
			\begin{algorithmic}[1]\label{Alg1}
				\REQUIRE
				The maximum degree of $\VecF(x)$ or an exaggerated dictionary
				\FOR{time interval $[t_0,t_0 +(T-1)\tau]$ with $T\in\N^+$}
				\STATE Gather $\mathbf U_{0,T},\mathbf X_{0,T},\tilde{\mathbf X}_{0,T}, \mathbf X_{1,T},\tilde{\mathbf X}_{1,T},$ as in~\eqref{eq:data}
				\STATE Compute $\mathbf J_{0,T}, \tilde{\mathbf J}_{0,T}$ based on~\eqref{eq:monomial traj-dis}
				\ENDFOR
				\STATE Employ \textsf{SOSTOOLS} and design $P^{-1}=\Theta\footnotemark,\mathds Y(x)$, and $\mathds Y(\tilde x)$, satisfying conditions in~\eqref{eq:main theorem conditions}, with fixed parameters $\epsilon,\vartheta\in\R^+$
				\STATE Design $\delta$-ISS Lyapunov function $\V = (x-\tilde x)^\top \Theta^{-1}(x-\tilde x)=(x-\tilde x)^\top P(x-\tilde x)$ and its corresponding $\delta$-ISS  controller $u = \mathbf U_{0,T}\mathds Y(x) Px + \hat u$
				\ENSURE $\delta$-ISS Lyapunov function and its $\delta$-ISS controller
			\end{algorithmic}
		\end{center}
	\end{algorithm}

We outline the necessary steps of our data-driven approach for the design of the $\delta$-ISS Lyapunov function and its $\delta$-ISS controller in Algorithm~\ref{Alg1}. 

\section{Simulation Results}\label{sec:4}
\footnotetext{To satisfy conditions~\eqref{eq:con1 SOS-dis}, \eqref{eq:con1 SOS'-dis}, and \eqref{eq:con2 SOS-dis}, we introduce $\Theta = P^{-1}$ and require it to be a \emph{symmetric positive-definite} matrix. Once conditions~\eqref{eq:con1 SOS-dis}, \eqref{eq:con1 SOS'-dis}, and \eqref{eq:con2 SOS-dis} are met and $\Theta$ is designed, the matrix $P$ can then be determined using inversion on $\Theta$, \ie, $\Theta^{-1} = (P^{-1})^{-1} = P$.}
\begin{figure}[t!]
	\centering
	\subfloat[Evolution of $x_1(t)$ and $\tilde x_1(t)$]{
		\includegraphics[width=0.75\linewidth]{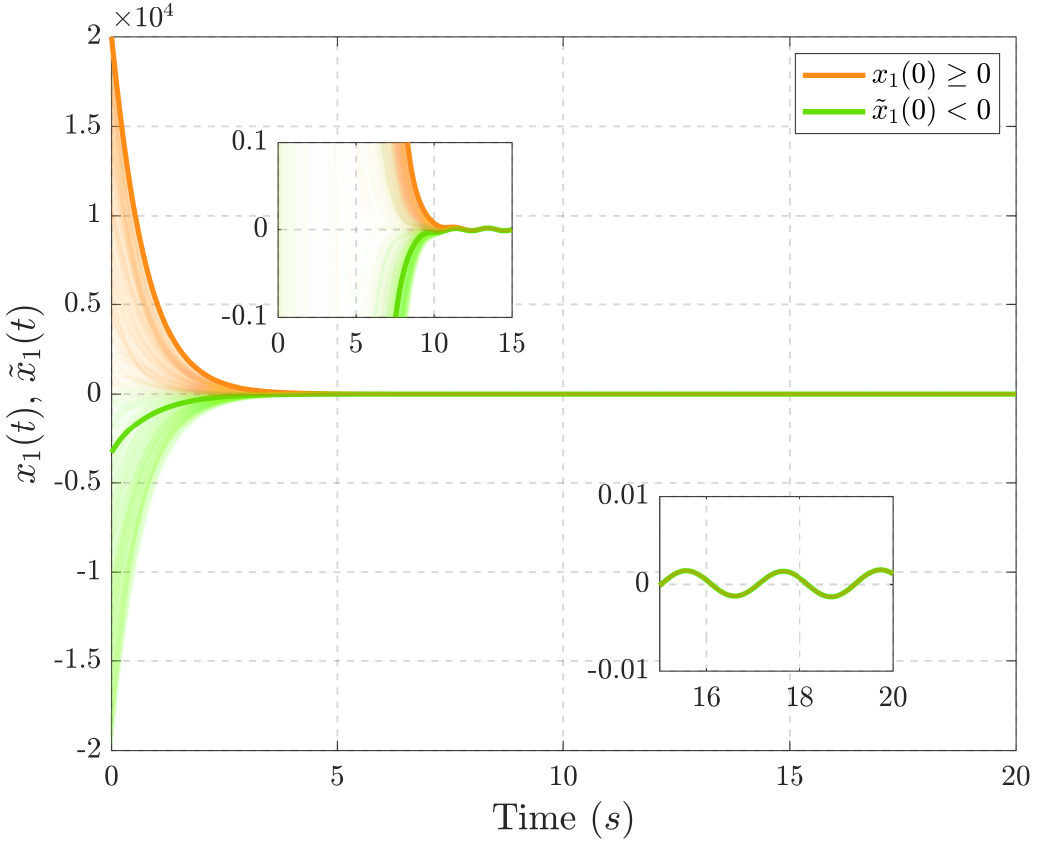}
		\label{fig:x1}}
	\\
	\subfloat[Evolution of $x_2(t)$ and $\tilde x_2(t)$]{
		\includegraphics[width=0.75\linewidth]{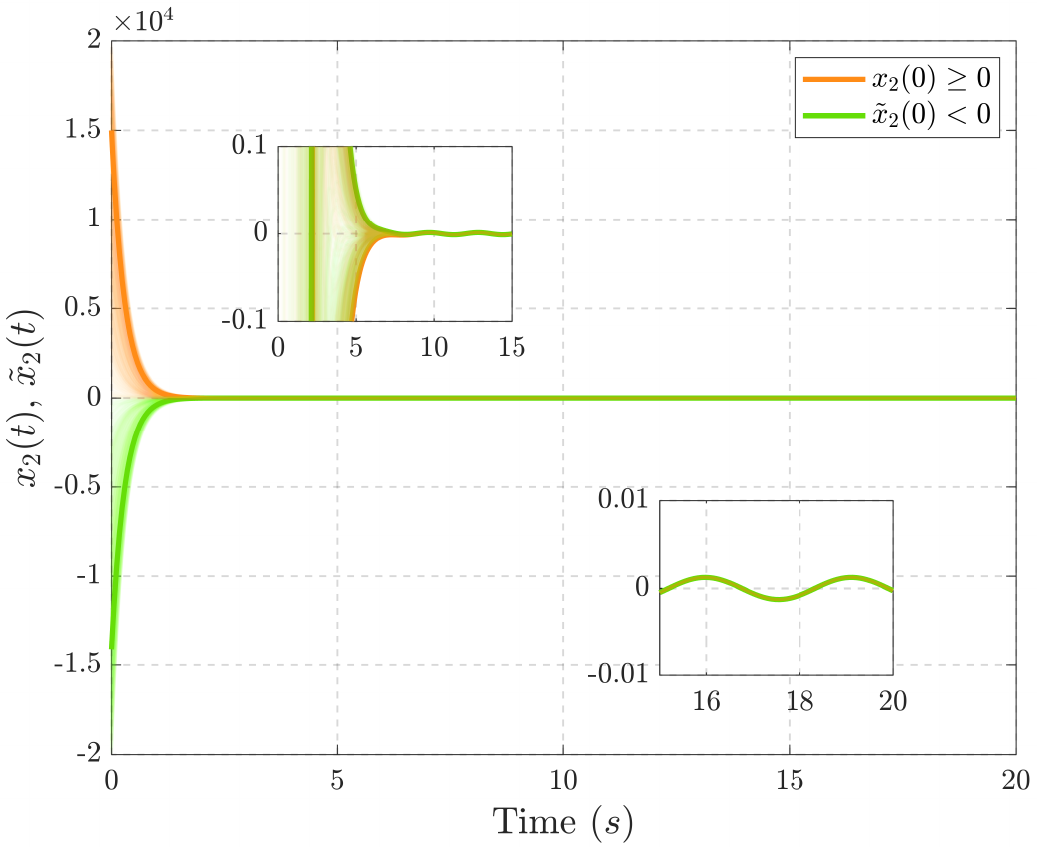}
		\label{fig:x2}}\\
	\subfloat[Evolution of $x_3(t)$ and $\tilde x_3(t)$]{
		\includegraphics[width=0.75\linewidth]{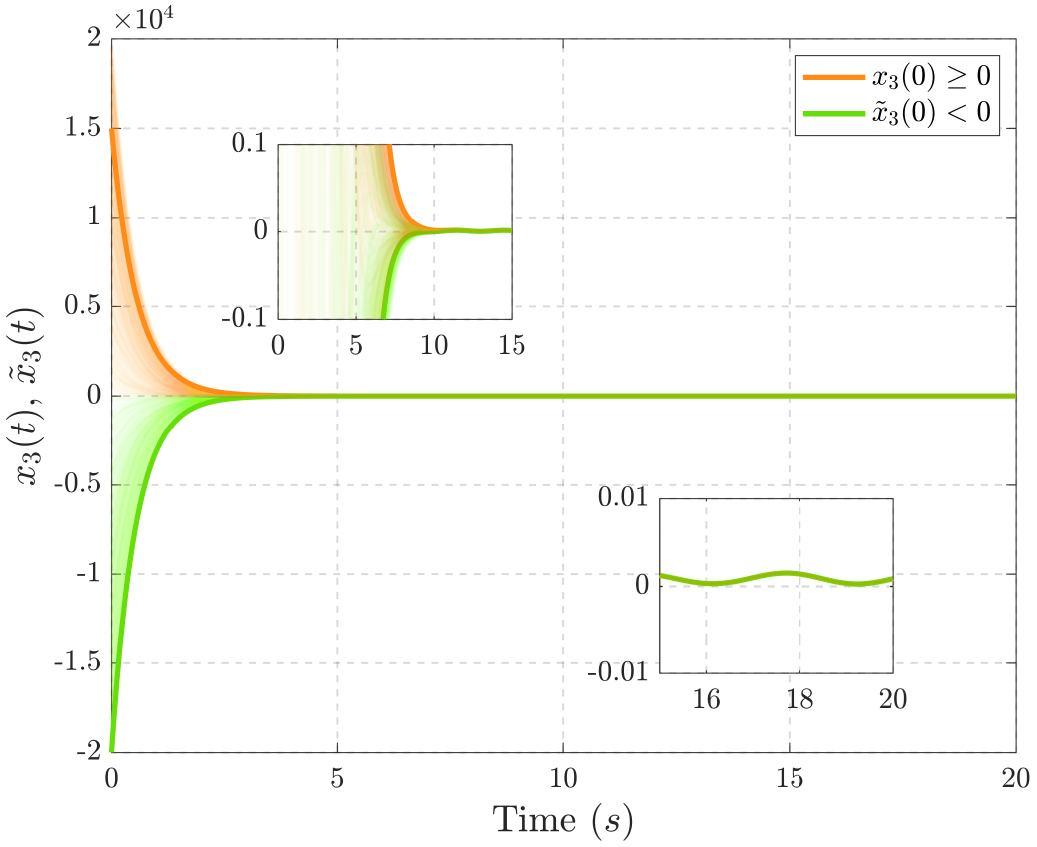}
		\label{fig:x3}}
	\caption{Trajectories $x(t)$ and $\tilde x(t)$ starting from $1000$ arbitrary initial conditions under the designed $\delta$-ISS controller $u(t)$ with $\hat u(t) = \hat{\tilde u}(t) = [\sin(3t)~~\cos(2t)~~\sin^2(t)]^\top$.}
	\label{fig:trajectories}
\end{figure}
We demonstrate the effectiveness of our proposed approach by applying it to a physical system including a rotating rigid spacecraft~\cite{khalil2002control}, characterized by three states, $x = [x_1 ~~x_2 ~~x_3]^\top$, which represent the angular velocities $\omega_1$ to $\omega_3$ along the principal axes. The dynamics of the system are described as follows:
\begin{equation}\label{eq:sc}
	\begin{split}
		\dot x_{1} &= \frac{(J_{2} - J_{3})}{J_{1}}x_{2}x_{3} + \frac{1}{J_{1}}u_{1},\\
		\dot x_{2} &= \frac{(J_{3} - J_{1})}{J_{2}}x_{1}x_{3} + \frac{1}{J_{2}}u_{2},\\
		\dot x_{3} &= \frac{(J_{1} - J_{2})}{J_{3}}x_{1}x_{2} + \frac{1}{J_{3}}u_{3},
	\end{split}
\end{equation}
where $u = [u_1 ~~u_2 ~~u_3]^\top$ represents the torque input, and $J_1 = 200, J_2 = 200$, and $J_3 = 300$ denote the principal moments of inertia. In accordance with the system description~\eqref{eq:sys}, the unknown matrices $A$ and $B$ are
\begin{align*}
    A = \begin{bmatrix}
        \frac{(J_{2} - J_{3})}{J_{1}} & 0 & 0\\
        0 & \frac{(J_{3} - J_{1})}{J_{2}} & 0\\
        0 & 0 & \frac{(J_{1} - J_{2})}{J_{3}}
    \end{bmatrix}\!\!,\, B = \begin{bmatrix}
        \frac{1}{J_{1}} & 0 & 0\\
        0 & \frac{1}{J_{2}} & 0\\
        0 & 0 & \frac{1}{J_{3}}
    \end{bmatrix}\!\!,
\end{align*}
where the actual dictionary is $\VecF(x) = [x_{2}x_{3}~~x_{1}x_{3}~~x_{1}x_{2}]^\top\!,$ all of which are assumed to be unknown. With a slight abuse of notation, given the exaggerated dictionary $\VecF(x) = [x_{1}~~x_{2}~~x_{3}~~x_{1}^2~~x_{1}x_{2}~~x_{1}x_{3}~~x_{2}x_{3}]^\top\!,$ which includes the actual nonlinear terms in the dynamics along with irrelevant terms, we choose $\aleph(x) \!= \![\aleph_1^\top~~\aleph_2^\top~~\aleph_3^\top]^\top\!,$ with $\aleph_1 \!= \!\mathds I_3, \aleph_2 \!=\! \begin{bmatrix}
x_1 & 0 & 0
\end{bmatrix}\!\!,\aleph_3 \!=\! \begin{bmatrix}
x_2 & 0 & 0\\
0 & x_3 & 0\\
0 & 0 & x_1
\end{bmatrix}\!\!.\!$ Following the steps in Algorithm~\ref{Alg1}, we collect $T = 300$ samples with a sampling time $\tau = 0.1$, while setting $\epsilon = 0.9$, $\vartheta = 0.44$, and design
\begin{align*}
	P =& \begin{bmatrix}
		1.9087  & -0.1404 &  -0.1441\\
		-0.1404  &  5.3907  &  0.1229\\
		-0.1441  &  0.1229  &  2.8604
	\end{bmatrix}\!\!,\\
	\Sigma =&  \begin{bmatrix}
		-0.7926  &  0.0245  &  0.0058\\
		-0.0459 &  -0.6426  &  0.0088\\
		-0.0195  &  0.0130 &  -0.6633
	\end{bmatrix}\!\!,
\end{align*}
\begin{align*}
	u_1  =&-7.7964\times 10^{-11}x_1^2 + 7.9309\times 10^{-11}x_1x_2 \\
	&+ 4.0449\times 10^{-10}x_1x_3 - 7.1981\times 10^{-12}x_2^2 \\
	&+ 100x_2x_3 - 1.699\times 10^{-10}x_3^2 - 303.4255x_1 \\
	&+ 48.8131x_2 + 26.7615x_3+ \hat u_1\\
   u_2 = &-1.648\times 10^{-11}x_1^2 - 1.2567\times 10^{-10}x_1x_2 \\
   &- 100x_1x_3 - 3.4092\times 10^{-11}x_2^2 + 7.6116\times 10^{-9}x_2x_3 \\
   &+ 3.4715\times 10^{-10}x_3^2 + 0.2917x_1 - 691.3298x_2 \\
   &- 9.4448x_3 + \hat u_2,\\
	u_3 = & ~9.6134\times 10^{-12}x_1^2 - 6.2879\times 10^{-12}x_1x_2 \\
	&- 3.3625\times 10^{-10}x_1x_3 - 3.2756\times 10^{-12}x_2^2 \\
	&+ 8.3844\times 10^{-9}x_2x_3 + 2.3587\times 10^{-10}x_3^2 \\
	&+ 16.9632x_1 - 2.5509x_2 - 567.8759x_3 + \hat u_3.
\end{align*}
We choose $\hat u(t) = \hat{\tilde u}(t) = [\sin(3t)~~ \cos(2t)~~\sin^2(t)]^\top$, and apply the controller to the system for $1000$ arbitrary initial conditions $x(0)\in[0, 2\times 10^4]$ and $\tilde x(0)\in[-2\times 10^4, 0)$. Since $\hat u(t) = \hat{\tilde u}(t)$, the trajectories are expected to demonstrate the $\delta$-GAS property, as per Definition~\ref{def:delta-ISS}. As can be seen in Fig.~\ref{fig:trajectories}, all trajectories converge to one another (demonstrating $\delta$-ISS property while ensuring robustness against $\hat u, \hat{\tilde u}(t)$) and remain around the equilibrium point, \ie, the origin (demonstrating the $\delta$-GAS property). For better illustration, the norm of the difference between every two trajectories has been sketched in logarithmic scale in Fig.~\ref{fig:error}, which is monotonically decreasing and elucidates that all trajectories converge to each other asymptotically.

\begin{figure}
	\centering
	\includegraphics[width=0.7\linewidth]{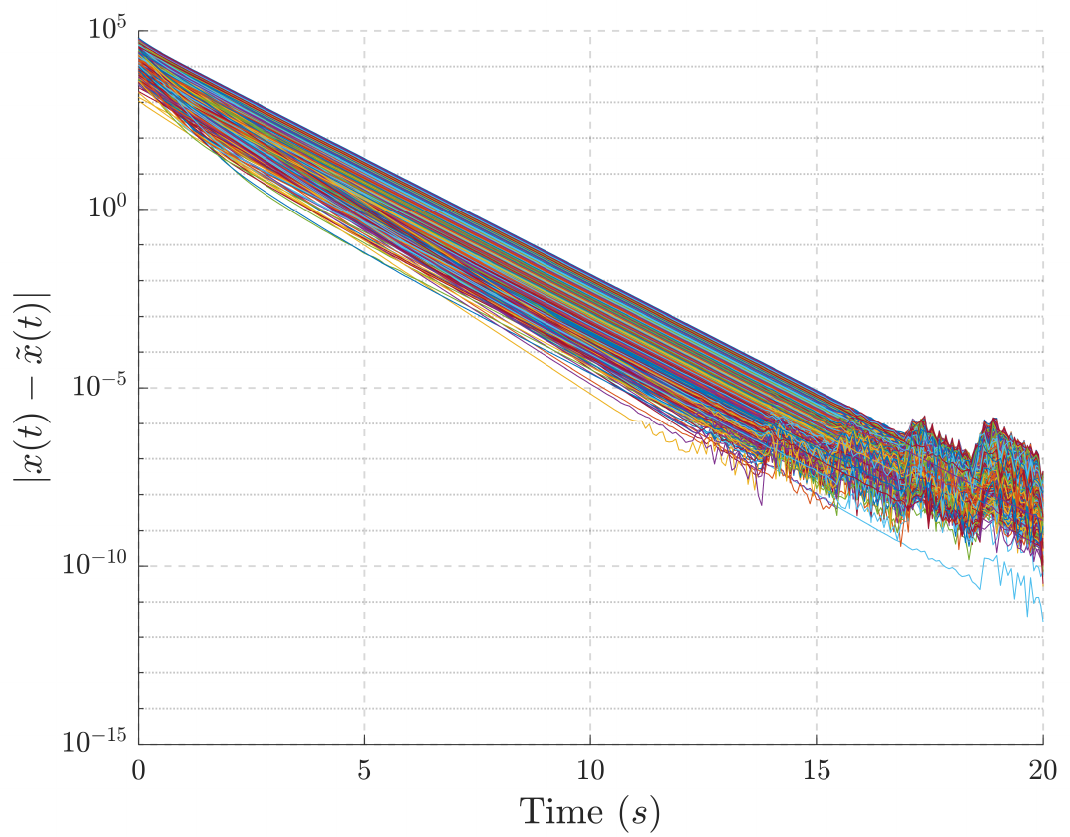}
	\caption{Norm of the difference between trajectories with respect to each other in logarithmic scale.}
	\label{fig:error}
\end{figure}

\section{Conclusion}\label{sec:5}
In this work, we developed a data-driven methodology for designing $\delta$-ISS Lyapunov functions together with $\delta$-ISS controllers for continuous-time input-affine nonlinear systems with polynomial dynamics. Our approach ensured the $\delta$-ISS property without requiring explicit system dynamics, relying solely on two input-state trajectories from sufficiently excited system behavior. Using the collected samples and satisfying a particular rank condition, we designed $\delta$-ISS controllers by formulating a sum-of-squares optimization program. A physical case study with unknown dynamics validated our methodology, demonstrating its robustness subject to external inputs and its effectiveness in ensuring incremental stability of nonlinear systems. Expanding our framework to design $\delta$-ISS controllers for nonlinear systems beyond polynomials and considering measurement noises are being explored as a future research direction.

\bibliographystyle{ieeetr}
\bibliography{biblio}

\end{document}